\newtheorem{theorem}{Theorem}
\newtheorem{lemma}[theorem]{Lemma}
\newenvironment{proof}{\par\noindent{\sf Proof.}}{\par}
\newcommand{\QED}{\hfill\rule[1pt]{4pt}{6pt}}
\newcommand{\msize}[1]{{\left|#1\right|}}
\newcommand{\NP}{{\sf NP}}
\newcommand{\PP}{{\sf P}}
\newcommand{\calP}{\mathcal{P}}
\def\calI{{\cal I}}
\def\MPQ{{\cal MPQ}}
\def\P{{\cal P}}
\def\Q{{\cal Q}}
\begin{document}
\title{On Complexity of Flooding Games on Graphs with Interval Representations}

\author{Hiroyuki Fukui\footnotemark{}, 
Yota Otachi\footnotemark{},
Ryuhei Uehara\footnotemark{},
Takeaki Uno\footnotemark{}, and Yushi Uno\footnotemark{}}

\date{Japan Advanced Institute of Science and Technology (JAIST),
Nomi, Ishikawa 923-1292, Japan. \url{{s1010058,otachi,uehara}@jaist.ac.jp}\\
National Institute of Informatics (NII),
Chiyoda-ku, Tokyo 101-8430, Japan. \url{uno@nii.jp}\\
Osaka Prefecture University, 
Naka-ku, Sakai 599-8531, Japan. \url{uno@mi.s.osakafu-u.ac.jp}
}
\maketitle

\begin{abstract}
The flooding games, which are called Flood-It, Mad Virus, or HoneyBee,
are a kind of coloring games and they have been becoming popular online.
In these games, each player colors one specified cell in his/her turn, 
and all connected neighbor cells of the same color are also colored by the color. 
This flooding or coloring spreads on the same color cells.
It is natural to consider the coloring games on general graphs.
That is, once a vertex is colored, the flooding flows along edges in the graph.
Recently, computational complexities of the variants of 
the flooding games on several graph classes have been studied. 
In this paper, we investigate the flooding games 
on some graph classes characterized by interval representations.
Our results state that the number of colors is a key parameter to determine 
the computational complexity of the flooding games.
If the number of colors is not bounded,
the flooding game is \NP-complete even on caterpillars and proper interval graphs.
On the other hand, when the number of colors is a fixed constant,
the game can be solved in polynomial time on interval graphs.
We also state similar results for split graphs.

\noindent
{\bf Keywords:} 
Computational complexity, 
fixed parameter tractable,
flooding game,
graph coloring, 
interval graph,
split graph.
\end{abstract}

\section{Introduction}

The {\em flooding game} is played on a precolored board,
and a player colors a cell on the board in a turn.
When a cell is colored with the same color as its neighbor,
they will be merged into one colored area.
If a player changes the color of one of the cells belonging to 
a colored area of the same color, the color of all cells in the area are changed.
The game finishes when all cells are colored with one color.
The objective of the game is to minimize the number of turns
(or to finish the game within a given number of turns).
This one player flooding game is known as Flood-It (\figurename~\ref{fig:floodit}).
In Flood-It, each cell is a precolored square, the board consists of $n\times n$ cells,
the player always changes the color of the top-left corner cell, and 
the goal is to minimize the number of turns.
This game is also called Mad Virus played on a honeycomb board (\figurename~\ref{fig:MadVirus}).
One can play both the games online (Flood-It (\url{http://floodit.appspot.com/}) and
Mad Virus (\url{http://www.bubblebox.com/play/puzzle/539.htm})).


\begin{figure}[t]
\begin{minipage}{0.48\linewidth}
\begin{center}
\includegraphics[width=0.8\textwidth]
{.//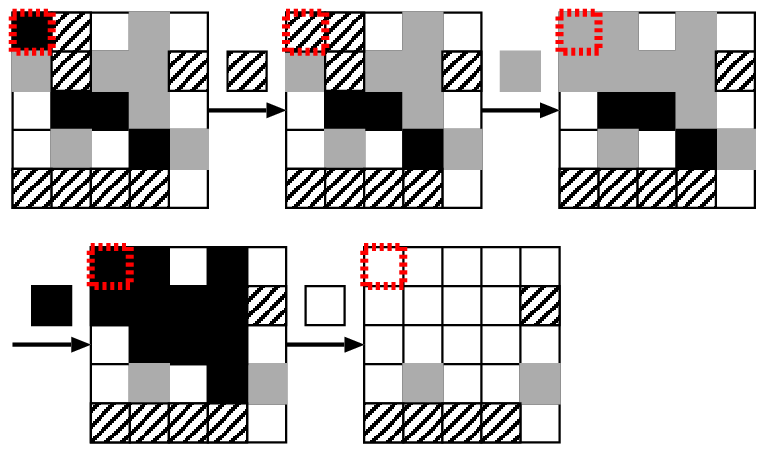}
\end{center}
\caption{A sequence of four moves on a $5\times 5$ Flood-It board.}
\label{fig:floodit}
\end{minipage}\hfill
\begin{minipage}{0.48\linewidth}
\begin{center}
\includegraphics[width=0.8\textwidth]
{.//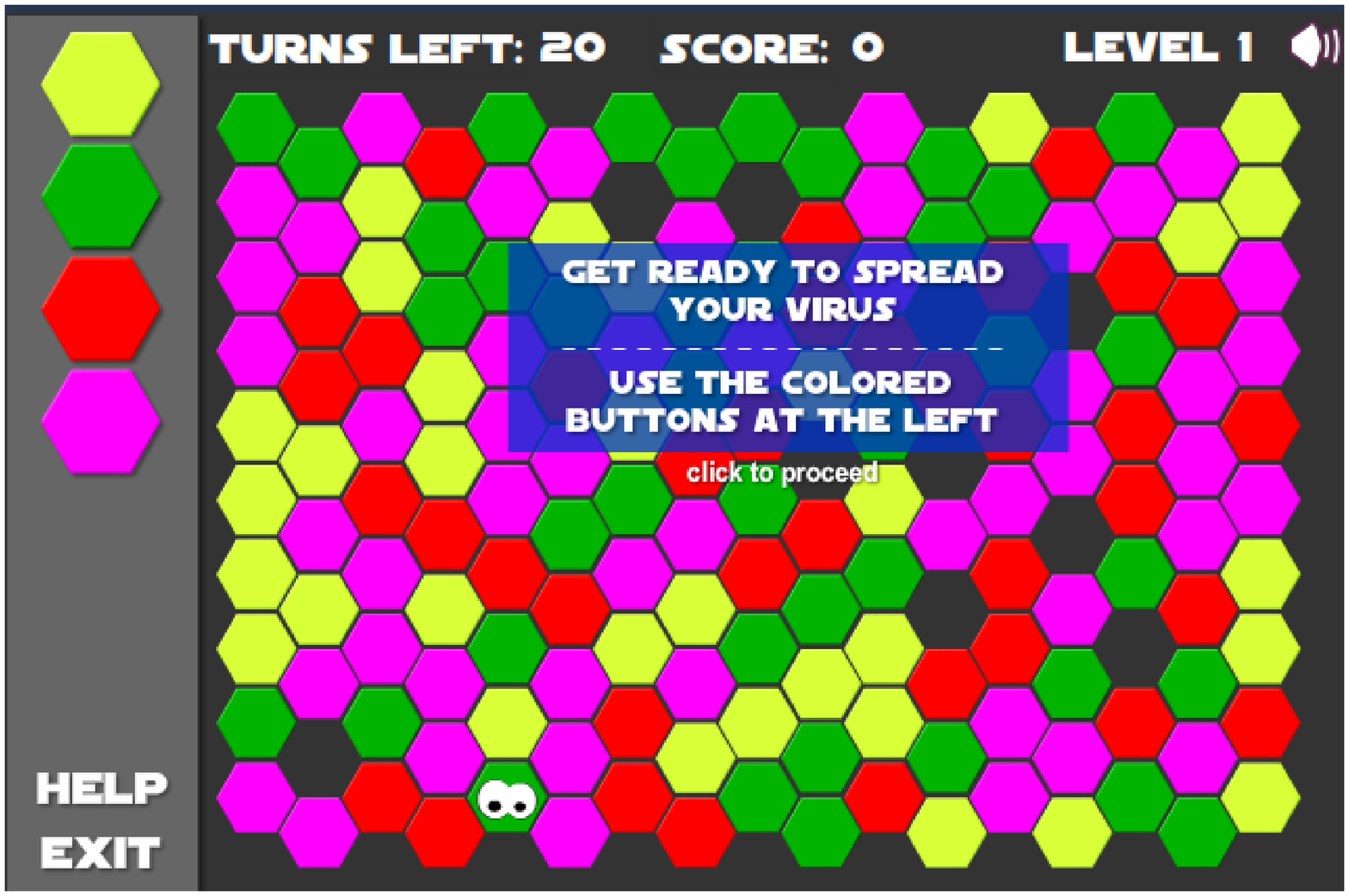}
\end{center}
\caption{The initial screen of the Mad Virus 
(\protect\url{http://www.bubblebox.com/play/puzzle/539.htm}).
The player changes the cell having eyes.}
\label{fig:MadVirus}
\end{minipage}
\end{figure}

In the original flooding games, the player colors a specified cell.
However, it is natural to allow the player to color any cell.
The original game is called {\em fixed} and this extended game is called {\em free}.
The flooding games are intractable in general on the grid board;
it is \NP-hard on rectangular $3\times n$ boards when 
the number of colors is 4 \cite{MeeksScott2012},
and it is still \NP-hard on rectangular $2\times n$ boards 
when the number of colors is $O(n)$ \cite{MeeksScott2011}.
On the other hand, Meeks and Scott also show an $O(h(k)n^{18})$ time
algorithm for the flooding game on $2\times n$ boards when the number of colors is $k$, 
where $h(k)$ is an explicit function of $k$ \cite{MeeksScott2011}.

In recent literature, the game board has been generalized to general graph;
that is, the vertex set corresponds to the set of precolored cells,
and two cells are neighbors if and only if 
the corresponding vertices are adjacent in the graph.
It is also natural to parameterize the number $k$ of colors.
The generalized flooding games on general graphs have been well investigated 
from the viewpoint of computational complexity.
We summarize recent results in \tablename~\ref{tab:summary}.
(The other related results can be found in 
\cite{MeeksScott2012a,CliffordJalseniusMontanaroSach2012}.)

\begin{table}
\begin{center}
\begin{tabular}{|c|cc|}\hline
Graph classes & fixed & fixed, $k$ is bounded \\\hline
general graphs 
 & \NP-C 
 & \NP-C if $k\ge 3$ \cite{ArthurCliffordJalseniusMontanaroSach2010}\\
 &       
 & \PP{} if $k\le2$ (trivial) \\\hline
($\square$/$\triangle$/hex.) grids
 & \NP-C & \NP-C if $k\ge 3$ \cite{LagoutteNaualThierry2011}\\\hline 
paths/cycles 
 & $O(n^2)$ \cite{LagoutteNaualThierry2011} 
 & $O(n^2)$ \cite{LagoutteNaualThierry2011} \\\hline
co-comparability graphs
 & \PP{} \cite{FleischerWoeginger2010}
 & \PP{} \cite{FleischerWoeginger2010} \\\hline
split graphs
 & \NP-C \cite{FleischerWoeginger2010}
 & \PP{} \cite{FleischerWoeginger2010} \\\hline
caterpillars
 & \PP{}\setcounter{footnote}{0}\footnotemark\addtocounter{footnote}{-1}
 & $O(4^k k^2 n^3)$ (This)\\\hline
proper interval graphs
 & \PP{}\setcounter{footnote}{0}\footnotemark\addtocounter{footnote}{-1}
 & $O(4^k k^2 n^3)$ (This)\\\hline
interval graphs
 & \PP{}\footnotemark
 & $O(4^k k^2 n^3)$ (This)\\\hline\hline
Graph classes & free & free, $k$ is bounded \\\hline
general graphs 
 & \NP-C  &  \NP-C if $k\ge 3$ \cite{ArthurCliffordJalseniusMontanaroSach2010} \\
 &        & \PP{} if $k\le2$ \cite{Lagoutte2010,LagoutteNaualThierry2011} \\\hline
($\square$/$\triangle$/hex.) grids
 & \NP-C & \NP-C if $k\ge 3$ \cite{LagoutteNaualThierry2011}\\\hline
paths/cycles 
 & $O(n^3)$ \cite{FukuiNakanishiUeharaUnoUno2011}\footnotemark\addtocounter{footnote}{-1}
 &  $O(n^3)$ \cite{FukuiNakanishiUeharaUnoUno2011}\footnotemark \\\hline
split graphs
 & \NP-C (This) 
 & $O((k!)^2+n)$ (This) \\\hline
caterpillars
 & \NP-C (This,\cite{MeeksScott2011})
 & $O(4^k k^2 n^3)$ (This)\\\hline
proper interval graphs
 & \NP-C (This)
 & $O(4^k k^2 n^3)$ (This)\\\hline
interval graphs
 & \NP-C (This)
 & $O(4^k k^2 n^3)$ (This)\\\hline
\end{tabular}
\end{center}
\caption{Computational complexities of the flooding games on some graph classes.}
\label{tab:summary}
\end{table}

Since the original game is played on a grid board, 
the extension to the graph classes having geometric representation is natural and reasonable. 
For example, each geometric object corresponding to a vertex can be regarded as 
a ``power'' or an ``influence range'' of the vertex.
That is, when a vertex is colored, 
the influence propagates according to the geometric representation.
Therefore, this game models epidemics, fires, and rumors on social networks.
In this paper, we first consider the case that the propagation is in one dimensional.
That is, we first investigate the computational complexities of
the flooding game on graphs that have interval representations.
We will show that even in this restricted case, 
the problem is already intractable in general.

From the viewpoint of the geometric representation of graphs,
the notion of interval graphs is a natural extension of paths.
(We here note that path also models rectangular $1\times n$ boards of the original game.)
A path is an interval graph such that each vertex has an influence on at most two neighbors.
In other words, each vertex has least influence to make the network connected.
In this case, the flooding games can be solved in polynomial time 
\cite{LagoutteNaualThierry2011,FukuiNakanishiUeharaUnoUno2011}.
However, we cannot extend the results for a path to an interval graph straightforwardly.
There are two differences between paths and interval graphs.
First, in an interval graph, short branches exist.
That is, one vertex can have three or more neighbors of degree one.
Second, interval graphs have twins;
two (or more) vertices are called twins if their (closed) neighbor sets coincide.
Interestingly, one of these two differences is sufficient to make 
the flooding game intractable:
\begin{theorem}
\label{th:NP}
The free flooding game is $\NP$-complete even on 
(1) proper interval graphs, and (2) caterpillars.
These results still hold even if the maximum degree of the graphs is bounded by 3.
\end{theorem}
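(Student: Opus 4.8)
The plan is to prove membership in $\NP$ first and then give polynomial-time many-one reductions from a known hard flooding instance, one tailored to each graph class. For membership, I would observe that a play is just a sequence of moves, each move being a monochromatic region together with a target colour, and that an optimal play uses only \emph{useful} moves, each of which strictly decreases the number of monochromatic regions (a recolouring that merges with no same-coloured neighbour is wasteful). Since the board starts with at most $n$ regions, the minimum number of moves is at most $n-1$, so a witness is a move sequence of length at most the target bound $t\le n$. Such a sequence is verified in polynomial time by simulating the flooding and checking that the board becomes monochromatic within $t$ moves. Hence both problems lie in $\NP$, and the work is entirely in the hardness reductions.

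For part (2), the caterpillar case, I would exploit the observation that a caterpillar of maximum degree $3$ is structurally a $2\times n$ board with the horizontal edges of the top row deleted: the spine plays the role of the bottom row, and the at-most-one pendant leaf per spine vertex plays the role of the single top cell in that column. Since free Flood-It on $2\times n$ boards with an unbounded number of colours is $\NP$-hard~\cite{MeeksScott2011}, the natural strategy is to adapt that construction to this ``comb'' and re-prove the move-count bounds after the top-row adjacencies are removed. Concretely, I would colour the spine so that the flood is forced to sweep along it position by position, and use each pendant leaf to impose a colour demand at its position, arranging the colours so that an optimal play decomposes into independent per-position obligations plus one global sweep; the minimum number of moves then equals a quantity encoding the source instance's optimum.

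For part (1), proper interval graphs, I would reuse the same high-level encoding but replace each pendant leaf by a \emph{twin} vertex. This substitution is forced: a spine vertex carrying a pendant leaf together with its two spine neighbours induces a claw $K_{1,3}$, which is an obstruction for (proper) interval graphs, whereas a twin is a second unit interval placed at the same location, adjacent to exactly the same vertices, so the graph stays a unit-interval graph of bounded degree. A twin imposes essentially the same colour demand as a leaf, but since it is itself adjacent to the spine neighbours it can be absorbed slightly differently, so I would re-verify the move-count analysis for the twin version rather than quote the caterpillar computation verbatim.

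The hard part will be controlling the global flooding dynamics so that the counting is tight. Because the controlled region propagates along the shared spine and merges adjacent same-coloured regions, a clever play could in principle ``save'' moves by servicing several positions' demands simultaneously or out of order; the heart of the proof is a structural lemma showing that no such cross-position savings exist, so that the optimum is exactly the additively separable value encoding the source optimum. A secondary subtlety is realizing the construction simultaneously within the degree-$3$ bound and, for part (1), as a genuine unit-interval representation, which limits how densely the colour demands can be packed and is what ultimately distinguishes the leaf gadget from the twin gadget.
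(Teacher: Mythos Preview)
Your membership argument and the caterpillar plan are essentially the paper's: it too reduces Vertex Cover to a spine-with-leaves gadget (one gadget per edge, with the two leaf colours recording the two endpoints), and it explicitly remarks that this is the $2\times n$ construction of \cite{MeeksScott2011} with the top row's horizontal edges deleted.

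The proper-interval part, however, has a real gap. Replacing the pendant leaf at a spine vertex $b_3$ by a twin makes that twin adjacent to $b_2$ and $b_4$ as well. This already breaks the degree-$3$ claim (if $b_4$ also carries a twin then $b_4$ acquires four neighbours), and, more seriously, it collapses the move count. In the caterpillar edge-gadget the leaf $h_3$ of colour $u$ is \emph{not} adjacent to $b_4$ (also colour $u$), so absorbing it costs a dedicated move; a twin of $b_3$, by contrast, is adjacent to $b_4$ and hence sits in $b_4$'s monochromatic component from the start. Working this through, the twin version of the per-edge gadget floods in three moves instead of four, the ``leftover vertex $=$ cover element'' bookkeeping disappears, and the reduction no longer encodes Vertex Cover. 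So ``re-verifying the move-count analysis'' is not enough: the gadget has to change. The paper's construction is accordingly not a local tweak of the caterpillar one. It separates each twin pair from the neighbouring backbone intervals by buffer paths of length $m$ carrying fresh, globally unique colours; this simultaneously restores a separable count (now $m^2+k'$ instead of $3m+k'$) and keeps every degree at most $3$, and the lower bound (Lemma~\ref{lem:reduction}) hinges on the fact that bypassing the ``selected'' endpoint at any edge forces at least $m$ extra moves through the buffer.
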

Both of the classes of caterpillars and proper interval graphs 
consist of very simple interval graphs.
If the maximum degree is bounded by 2,
these classes degenerate to the set of paths.
Thus the results are tight.

General interval graphs have rich structure since 
vertices correspond to intervals of variant lengths. 
Therefore, it is not easy to solve the free flooding game efficiently.
However, when the number of colors is a constant, the game becomes tractable.
\begin{theorem}
\label{th:poly}
The free flooding game on an interval graph can be solved in $O(4^k k^2 n^3)$ time.
\end{theorem}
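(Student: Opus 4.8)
The plan is to design a dynamic programming algorithm over the interval representation, exploiting the fact that once the number of colors $k$ is fixed, the relevant ``state'' of a partial flooding can be compactly described. First I would fix an interval representation of the input graph $G$ and sort the intervals by their left endpoints (equivalently, process the $2n$ endpoints from left to right). The key structural observation for interval graphs is that after any sequence of flooding moves, the set of vertices already absorbed into the ``flood region'' (the growing monochromatic connected component) interacts with the not-yet-flooded part only through a bounded interface: because intervals are one-dimensional, at any point in the left-to-right sweep the boundary between the processed and unprocessed regions is controlled by a small number of ``active'' intervals, and the colors present in the unprocessed region that still matter are just the $k$ colors. I would therefore define subproblems indexed by a position in the sweep together with a description of which colors are still ``live'' on each side of the current frontier.

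\medskip

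\noindent The heart of the argument is to show that an optimal play can be normalized so that the flood region always remains an \emph{interval-convex} connected set — i.e., it corresponds to a contiguous stabbing point or a contiguous sub-interval of the line — and that the game essentially reduces to deciding, for each maximal ``color block'' encountered along the line, whether and when to recolor it. This is analogous to the path case solved in \cite{FukuiNakanishiUeharaUnoUno2011,LagoutteNaualThierry2011}, but with two complications that Theorem~\ref{th:NP} warns us about: short branches (many degree-one neighbors hanging off one interval) and twins. I would handle these by collapsing twins and by observing that a branch of a single color can be absorbed ``for free'' once its attachment point is flooded, so branches do not increase the number of essential moves beyond a factor controlled by $k$.

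\medskip

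\noindent The DP table I would set up records, for each interval boundary, the current color of the flood region together with a subset $S \subseteq \{1,\dots,k\}$ of colors describing the configuration of the two frontiers (left and right, or the set of colors not yet eliminated from the active region). Since $S$ ranges over $2^k$ possibilities on each of the two relevant frontiers, there are $O(4^k)$ such state-labels; combined with $O(n)$ sweep positions and transitions that consider recoloring to one of $k$ colors, with an $O(k n^2)$ cost to evaluate each transition (recomputing which intervals become newly reachable), one arrives at the claimed $O(4^k k^2 n^3)$ running time. The transitions update the flood color and the live-color subsets, charging one move per color change, and I would prove by an exchange argument that the minimum over all valid sequences of transitions equals the optimal number of free flooding moves.

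\medskip

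\noindent \textbf{The main obstacle} I expect is establishing the normalization lemma: that among all optimal free-flooding strategies there is one in which the flood region grows monotonically as an interval-convex set and never ``wastes'' a move by recoloring to a color that will be immediately overwritten. Without the free-move restriction this would be straightforward, but since the player may color any vertex, I must rule out strategies that seed the flood in several disconnected spots and merge them cleverly — or show that any such strategy can be simulated, at no extra cost, by a left-to-right monotone one. Proving this exchange/normalization argument rigorously, while correctly accounting for twins and short branches so that the bounded-state DP genuinely captures \emph{every} optimal play (not merely monotone ones), is where the real work lies; the running-time bookkeeping and the DP recurrence itself are comparatively routine once the state space is shown to be sufficient.
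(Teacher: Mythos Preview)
Your proposal has the right aroma---a DP whose state carries subsets of the $k$ colors---but it diverges from the paper's proof in two places, and in one of them there is a real gap.

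First, the paper does \emph{not} argue via a monotone left-to-right sweep or an ``interval-convex growth'' normalization. Instead it builds from the interval representation a \emph{path of colour sets} $\hat S_0,\ldots,\hat S_{2P}$ (one set per sweep point) and runs an \emph{interval} DP
\[
f(\ell,r,c,S)\;=\;\min\text{ \#moves to make colour }c\text{ present in every }S_i,\ \ell\le i\le r,\text{ with residual colours }\subseteq S,
\]
with the recursion splitting $[\ell,r]$ at an arbitrary midpoint and paying $+1$ to merge the two halves. This divide-and-conquer shape is exactly what absorbs the possibility you are worried about---seeding several floods and merging them later---so the normalization lemma you flag as the ``main obstacle'' is never needed. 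Correctness comes from a short \emph{overriding} argument (Lemma~\ref{lem:propercor}): whenever a move recolours some interval that is already contained in a longer interval dominated by a colour $c''$, recolouring $c''$ instead is no worse. That is a much weaker and easier statement than forcing the flood region to be a single growing segment. Your sweep-with-frontier-subsets DP may well be made to work, but as stated it rests on a normalization that is both unnecessary and, for free flooding, not obviously true; the paper's interval DP sidesteps it entirely and is also where the $n^3$ (pairs $(\ell,r)$ times split point) and the $4^k$ (colour $c$ times subset $S$, with $S',S''$ ranging independently) arise naturally.

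Second---and this is the more serious omission---your outline treats only the proper case. For a general interval graph the representation is not unique and ``collapsing twins and absorbing degree-one branches'' does not suffice. The paper's extension uses the $\MPQ$-tree: the vertices $U$ in the root node have a unique linear section structure (Lemma~\ref{lem:MPQ}), every other interval is properly contained in some interval of $U$, and by the same overriding argument any optimal sequence can be assumed to recolour only vertices of $U$. One then runs the proper-interval DP on the root's sections, with each section's colour set augmented by the colours appearing in its subtree. Without this reduction you have no path to run your DP on, and your plan gives no substitute for it.
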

That is, the free flooding game on an interval graph 
is polynomial time solvable if the number $k$ of colors is fixed,
and that is $\NP$-complete if $k$ is not fixed.
Thus the game is fixed parameter tractable with respect to the number of colors.

We here compare our results with the results stated in \cite{MeeksScott2011}.
As mentioned above, a path of $n$ vertices is essentially the same as a
rectangular $1\times n$ board.
However, a $2\times n$ board cannot be modeled by an interval graph
since a $2\times 2$ board represents a cycle $C_4$, while $C_4$ is not an interval graph.
On the other hand, each vertex in a $2\times n$ board has degree at most three,
while the maximum degree of an interval graph is not bounded.
In a sense, the class of interval graphs is much larger than $2\times n$ boards;
for each $n$, there is only one $2\times n$ board, while there are exponentially many 
interval graphs with $2n$ vertices.
It is worth mentioning that our $O(4^k k^2 n^3)$ time algorithm is much faster, 
and is solvable for larger class than 
the $O(h(k)n^{18})$ time algorithm in \cite{MeeksScott2011}.

\setcounter{footnote}{1}
\footnotetext{The class of co-comparability graphs properly contains 
interval graphs and hence caterpillars and proper interval graphs.
Since this game is polynomial time solvable on a co-comparability graph, so they follow.}
\setcounter{footnote}{2}
\footnotetext{In \cite{FukuiNakanishiUeharaUnoUno2011}, 
the authors gave an $O(kn^3)$ algorithm. However, 
it can be improved to $O(n^3)$ easily in the same way in \cite{LagoutteNaualThierry2011}.}

We also extend the results for the fixed flooding game on a split graph
mentioned in \cite{FleischerWoeginger2010} to the free flooding game on a split graph.
Precisely, the free flooding game is $\NP$-complete even on a split graph,
and it can be solved in $O((k!)^2+n)$ time when the number $k$ of colors is fixed.

Although we only consider one player game in this paper, 
it is also natural to extend to multi-players.
Two-player variant is known as HoneyBee, which is available online at 
\url{http://www.ursulinen.asn-graz.ac.at/Bugs/htm/games/biene.htm}.
Fleischer and Woeginger have investigated this game from 
the viewpoint of computational complexity.
See \cite{FleischerWoeginger2010} for further details.

\section{Preliminaries}

We model the flooding game in the following graph-theoretic manner.
The game board is a connected, simple, loopless, undirected graph $G=(V,E)$.
We denote by $n$ and $m$ the number of vertices and edges, respectively.
There is a set $C=\{1,2,\ldots,k\}$ of colors, 
and every vertex $v\in V$ is precolored (as input) with some color $col(v)\in C$.
Note that we may have an edge $\{u,v\}\in E$ with $col(u)=col(v)$.
For a vertex set $U\subseteq V$, the vertex induced graph $G[U]$ is 
the graph $(U,F)$ with $F=E\cap\{\{u,v\}\mid u,v \in U\}$.
For a color $c\in C$, the subset $V_c$ contains all vertices in $V$ of color $c$.
For a vertex $v\in V$ and color $c\in C$, we define the {\em color-$c$-neighborhood}
$N_c(v)$ by the set of vertices in the same connected component as $v$ in $G[V_c]$.
Similarly, we denote by $N_c(W)=\cup_{w\in W}N_c(w)$ the color-$c$-neighborhood of 
a subset $W\subseteq V$.
For a given graph $G=(V,E)$ and the precoloring $col()$,
a {\em coloring operation} $(v,c)$ for $v\in V$ and $c\in C$ is defined by,
for each vertex $v'\in N_{c'}(v)\cup \{v\}$ with $c'=col(v)$,
setting $col(v')=c$.
For a given graph $G=(V,E)$ and a sequence 
$(v_1,c_1),(v_2,c_2),\ldots,(v_t,c_t)$ of coloring operations in $V\times C$,
we let $G_0=G$ and $G_i$ is the graph obtained by the coloring operation 
$(v_i,c_i)$ on $G_{i-1}$ for each $i=1,2,\ldots,t$.
In the case, we denote by $G_{i-1}\rightarrow_{(v_i,c_i)} G_{i}$
and $G_0\rightarrow^i G_{i}$ for each $0\le i\le t$.
Then the problem in this paper are defined as follows\footnote{In the fixed flooding game, 
$v_1=v_2=\cdots=v_t$ is also required.}:


\renewcommand{\algorithmcfname}{Problem}%
\begin{algorithm}[H]
 \caption{Free flooding game}
 \restylealgo{boxed}\linesnumbered
 \SetKwInOut{Input}{Input}
 \SetKwInOut{Output}{Output}
 \Input{A graph $G=(V,E)$ 
   such that each vertex in $V$ is precolored with $col(v)\in C$
   and an integer $t$;
 }
 \Output{Determine if there is a sequence of coloring operations
 $((v_1,c_1),(v_2,c_2),\ldots,(v_t,c_t))$ of length $t$ 
 such that all vertices in the resulting graph $G'$ (i.e.~$G\rightarrow^t G'$)
 have the same color;}
\end{algorithm}

For the problem, if a sequence of operations of length $t$ colors
the graph, the sequence is called a {\em solution} of length $t$.




\begin{figure}
\begin{center}
\includegraphics[width=0.8\textwidth]
{.//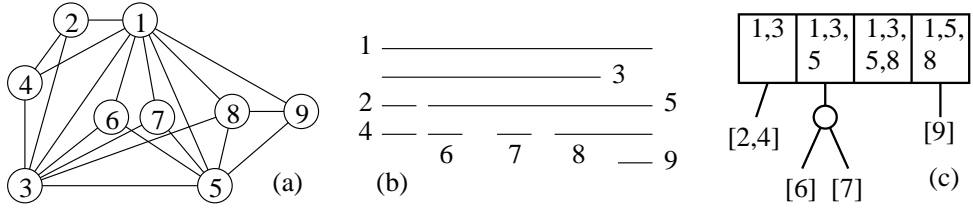}
\end{center}
\caption{(a) An interval graph $G$,
(b) one of interval representations of $G$, and
(c) unique $\MPQ$-tree of $G$ (up to isomorphism).}
\label{fig:interval}
\end{figure}

A graph $(V,E)$ with $V=\{v_1,v_2,\cdots,v_n\}$ is 
an {\em interval graph} if there is a set of (real) intervals
$\calI =\{I_{v_1},I_{v_2},\cdots,I_{v_n}\}$ such that
$\{v_i,v_j\}\in E$ if and only if $I_{v_i}\cap I_{v_j}\neq\emptyset$
for each $i$ and $j$ with $1\le i,j\le n$ (\figurename~\ref{fig:interval}(a)(b)).
We call the set $\calI$ of intervals an {\em interval representation} of the graph.
For each interval $I$, we denote by $L(I)$ and $R(I)$ the left and right
endpoints of the interval, respectively (hence we have $L(I)\le R(I)$ and $I=[L(I),R(I)]$). 
For a point $p$, let $N[p]$ denote the set of intervals containing the point $p$.
In general, there exist many interval representations for an interval graph $G$.
On the other hand, there exists unique representation for an interval graph $G$, 
which is called {\em $\MPQ$-tree} of $G$. The definition of $\MPQ$-tree is postponed to
Section \ref{sec:exint}.

An interval representation is {\em proper} if 
no two distinct intervals $I$ and $J$ exist such that $I$ properly contains $J$ or vice versa.
An interval graph is {\em proper} if 
it has a proper interval representation.
If an interval graph $G$ has an interval representation $\calI$ such that 
every interval in $\calI$ has the same length, $G$ is said to be a {\em unit} interval graph.
Such an interval representation is called a {\em unit} interval representation.
It is well known that the class of proper interval graphs 
coincides with the class of unit interval graphs \cite{Roberts1969}.
That is, given a proper interval representation, we can transform it into a unit interval
representation.
A simple constructive way of the transformation can be found in \cite{BogartWest1999}.
With perturbations if necessary, 
we can assume without loss of generality that $L(I)\neq L(J)$ (and hence $R(I)\neq R(J)$),
and $R(I)\neq L(J)$ for any two distinct intervals $I$ and $J$
in a unit interval representation $\calI$.

A connected graph $G=(V,E)$ is a {\em caterpillar} if $V$ can be
partitioned into $B$ and $H$ such that $G[B]$ is a path, and every
vertex in $H$ is incident to exactly one vertex in $B$. 
It is easy to see that the caterpillar $G$ is an interval graph.
We call $B$ (and $G[B]$) {\em backbone}, 
and each vertex in $H$ {\em hair} of $G$, respectively.

A graph $G=(V,E)$ is a {\em split graph} if $V$ can be partitioned into 
$C$ and $I$ such that $G[C]$ induces a clique and $G[I]$ induces an independent set.
(A vertex set $C$ is {\em clique} if every pair of vertices is joined by an edge,
and it is {\em independent set} if no pair is joined.)

\section{Graphs with interval representations}

Let $G=(V,E)$ be an interval graph precolored with at most $k$ colors.
We first show that, when $k$ is not bounded,
the flooding game on $G$ is \NP-complete even if 
$G$ is a caterpillar or a proper interval graph.
Next we show an algorithm that solves the flooding game in $O(4^k k^2 n^3)$
on a proper interval graph.
Lastly, we extend the algorithm to general interval graphs.
That is, the flooding game is fixed parameter tractable on 
an interval graph with respect to the number of colors.

\subsection{\NP-completeness on simple interval graphs}

To prove Theorem \ref{th:NP},
we reduce the following well-known \NP-complete problem to 
our problems (see \cite[GT1]{GJ79}):

\begin{algorithm}[H]
 \caption{Vertex Cover}
 \restylealgo{boxed}\linesnumbered
 \SetKwInOut{Input}{Input}
 \SetKwInOut{Output}{Output}
 \Input{A graph $G=(V,E)$ and an integer $k$;}
 \Output{Determine if there is a subset $S$ of $V$ such that
 for each edge $e=\{u,v\}\in E$, $e\cap S\neq \emptyset$ and $\msize{S}=k$;}
\end{algorithm}
Let $G=(V,E)$ and $k$ be an instance of the vertex cover problem.
Let $n=\msize{V}$, $m=\msize{E}$.

\begin{figure}
\begin{center}
\includegraphics[width=0.6\textwidth]{.//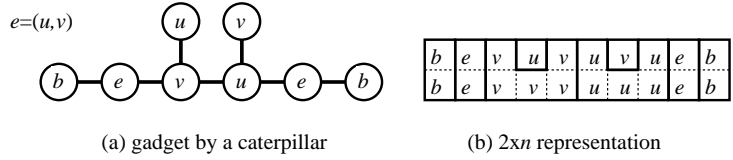}
\end{center}
\caption{A gadget for $e=(u,v)$.}
\label{fig:cat}
\end{figure}

\subsubsection{Caterpillar}
We first construct a caterpillar\footnote{We sometimes identify an
interval graph and its interval representation.}.
The key gadget is shown in \figurename~\ref{fig:cat}(a).
We replace an edge $e=(u,v)$ by a path 
$(b_1,b_2,b_3,b_4,b_5,b_6)$ with two hairs $h_3$ and $h_4$ attached to $b_3$ and $b_4$.
The colors are as shown in the figure.
Precisely, $col(b_1)=col(b_6)=b$, $col(b_2)=col(b_5)=e$,
$col(b_3)=col(h_4)=v$, and $col(b_4)=col(h_3)=u$.
It is not difficult to see that this gadget
cannot be colored in at most three turns.
On the other hand, there are some ways to color them
in four turns. One of them is:
color $b_3$ by $u$, color $b_3$ by $e$, color $b_3$ by $b$, and color $h_4$ by $b$.

\begin{figure}
\begin{center}
\includegraphics[width=\textwidth]{.//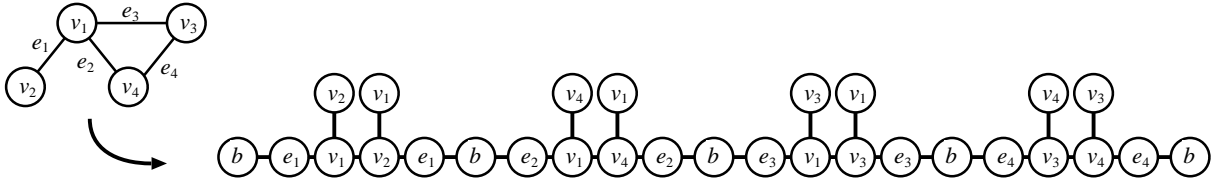}
\end{center}
\caption{An example of reduction to a caterpillar.}
\label{fig:cat2}
\end{figure}

Now we turn to the reduction of a general graph (\figurename~\ref{fig:cat2}).
We first arrange the edges in arbitrary way, and 
replace each edge by the gadget in \figurename~\ref{fig:cat}(a).
In this time, each vertex of color $b$ is shared by two consecutive edges.
In other words, endpoints of the gadget are shared by two consecutive gadget except both ends. 
This is the reduction. It is easy to see that the resultant graph is a caterpillar, 
the reduction is a polynomial-time reduction, and the flooding game is clearly in \NP.
Thus it is sufficient to show that 
a minimum vertex cover of $G$ gives a solution of the flooding game on the resultant graph 
and vice versa.

As shown in the example, all vertices on the backbone are colored by 
$b$ in $3m$ coloring operations.
On the other hand, $3m$ coloring operations are required to color the backbone.
Moreover, we have a leftover hair at each gadget, and their colors form 
a vertex cover $S$ since they hit all edges.
Therefore, once we have a vertex cover $S$, 
we color the resultant graph with $3m+\msize{S}$ operations.
On the other hand, if we can color the resultant graph with $3m+\msize{S'}$ operations,
we can extract $3m$ operations to color the backbone, and each of $\msize{S'}$ operations
is an operation to color a leftover hair, which gives us a vertex cover.
Therefore, the graph $G$ has a vertex cover of size $k'$ 
if and only if the resultant graph can be colored with $3m+k'$ coloring operations.
This completes the proof of Theorem~\ref{th:NP}(1).

We note that the basic idea of this reduction can be found in the proof
of the \NP-completeness on rectangular $2\times n$ boards in \cite{MeeksScott2011}.
In fact, the gadget in \figurename~\ref{fig:cat}(a) can be represented by 
a rectangular $2\times n$ board shown in \figurename~\ref{fig:cat}(b),
and we can obtain essentially the same proof in \cite{MeeksScott2011}.
We here explained the details of the proof to make this paper self-contained,
and this idea is extended to proper interval graphs in the next section.

\subsubsection{Proper interval graph}
We next construct an interval representation $\calI$ of 
a proper interval graph as follows (\figurename~\ref{fig:reduction}).

\begin{figure}
\begin{center}
\includegraphics[width=\textwidth]{.//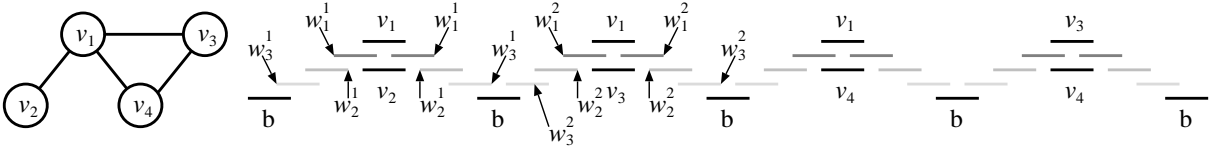}
\end{center}
\caption{Reduction from Vertex Cover to Flooding game.}
\label{fig:reduction}
\end{figure}
\begin{enumerate}
\item Let $C$ be the color set 
      $V\cup \{w_i^j \mid 1\le i\le m-1, 1\le j\le m\}\cup \{b\}$ of 
      $n+m(m-1)+1$ different colors.
      (Note that each vertex in $V$ has its own unique color.)
\item For each $0\le i\le m$, we put an interval $I_i=[4i,4i+1]$ 
      with precolor $col(I)=b$. We call these $m+1$ intervals {\em backbones}.
\item For each $e_i=\{u,v\}\in E$ with $0\le i < m$,
      we add two identical intervals 
      $J_i=[4i+2,4i+3]$ and $J'_i=[4i+2,4i+3]$ with precolor $col(J_i)=u$ and $col(J'_i)=v$.
      (Note that the ordering of the edges is arbitrary.)
\item Each two identical intervals $J_i$ and $J'_i$ are connected
      to the left and right backbone by paths of length $m$.
      Precisely, a left backbone $I=[4i,4i+1]$ and the two intervals
      $J_i=[4i+2,4i+3]$ and $J'_i=[4i+2,4i+3]$ are joined by 
      a path $(w_m^i,w_{m-1}^i,\ldots,w_1^i,w_0^i)$, 
      where $I=[4i,4i+1]=I_{w_m^i}$ and $J_i=I_{w_0^i}$ (which is identical to $J'_i$).
      (Note that $w_{1}^i$ has three neighbors: $w_{2}^i$
      and two vertices corresponding to $J_i$ and $J'_i$.)
      The intervals $J_i=[4i+2,4i+3]$, $J'_i=[4i+2,4i+3]$ are 
      connected to the right backbone $I=[4i+4,4i+5]$ in a symmetric way. 
      That is, they are connected by 
      a path $(w_{0}^i,w_{1}^i,\ldots,w_{m-1}^i,w_{m}^i)$ such that
      $I_{w_0^i}=[4i+2,4i+3]$, $I_{w_m^i}=[4i+4,4i+5]$.
      For each $j$ with $1\le j\le m-1$,
      we set $col(w_j^i)=w_j^i$ with $1\le j\le n$.
      That is, two paths from $[4i+2,4i+3]$ to both backbones have the same color sequence,
      and when we color the interval $J_i$ (or $J'_i$) by 
      the sequence $w_{1}^i,\ldots,w_{m-1}^i,b$, we can connect the left and right backbones.
\end{enumerate}
Now we show a lemma that immediately implies Theorem \ref{th:NP}(b).
\begin{lemma}
\label{lem:reduction}
In the reduction above, the original graph $G$ has a vertex cover of size $k'$
if and only if there is a sequence of coloring operations of length $m^2+k'$
to make the resulting interval representation in monochrome.
\end{lemma}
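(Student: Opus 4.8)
The plan is to prove both implications of the biconditional, reading the bound $m^2+k'$ as a forced cost of $m^2$ for ``threading'' the $m$ gadgets plus an additive cost that is exactly the size of a vertex cover. The structural fact I would isolate first is that the colours $w_1^i,\dots,w_{m-1}^i$ are private to gadget $i$ (no such colour appears outside that gadget), that the two backbones $I_i$ and $I_{i+1}$ are joined only through the two internal paths of gadget $i$, and that along either path these $m-1$ colours occur as $m-1$ nested layers separating the central intervals $J_i,J'_i$ from the backbones. I would also record the elementary fact that a coloring operation $(v,c)$ changes only the monochromatic component containing $v$, so an operation introducing a private colour $w_j^i$ is useless to every gadget other than $i$.

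For the ``only if'' direction I would begin with a vertex cover $S$, $\msize{S}=k'$, and build a solution of length $m^2+k'$. For each edge $e_i=\{u,v\}$ choose a covering endpoint, say $v\in S$, and thread gadget $i$ by applying the length-$m$ sequence $w_1^i,\dots,w_{m-1}^i,b$ to the central interval coloured by the \emph{non-cover} endpoint; by the construction this recolours both internal paths and fuses $I_i$ with $I_{i+1}$, and it leaves behind exactly one leftover central interval, whose colour $v$ lies in $S$. Carrying this out for all $m$ gadgets costs $m^2$ operations and unites every backbone into one $b$-coloured region, after which the only non-$b$ vertices are leftover central intervals whose colours all lie in $S$. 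It then remains to recolour these leftovers to $b$, and I would argue that this can be done in $\msize{S}=k'$ operations; checking that $k'$ rather than one operation per gadget genuinely suffices is where the precise interval layout must be used.

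For the ``if'' direction I would take any solution of length $m^2+k'$ and manufacture a vertex cover of size $k'$. The heart is a lower bound: threading gadget $i$, i.e.\ connecting $I_i$ to $I_{i+1}$, requires at least $m$ operations that are chargeable to gadget $i$, because the $m-1$ private layers $w_1^i,\dots,w_{m-1}^i$ must each be passed and the final fusion into colour $b$ costs one more; since private colours never help another gadget, these charges are disjoint and sum to at least $m^2$. The remaining (at most) $k'$ operations must each neutralise a leftover central interval, and I would show that for every edge $e_i$ at least one of its two central intervals $J_i,J'_i$ is neutralised by such a surplus operation carrying one of the endpoint colours of $e_i$. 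Reading off, for each surplus operation, the endpoint colour it uses yields a set of at most $k'$ vertices hitting every edge, i.e.\ a vertex cover of size at most $k'$; together with the forward direction this fixes the optimum at exactly $m^2+k'$ and proves the lemma.

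The main obstacle is the accounting for the leftover central intervals in both directions simultaneously, namely establishing that the surplus beyond $m^2$ equals $\msize{S}$ and not the number of edges. On the upper-bound side one must exploit the twin structure of $J_i,J'_i$ and the arrangement of equal-coloured central intervals so that the leftovers attached to a single cover vertex are cleared together; on the lower-bound side one must show that the privacy of the path colours forbids any ``sharing'' that would let fewer surplus operations finish the graph, so that each surplus operation can be blamed on a distinct covering vertex. The forced $m^2$ lower bound from the private nested layers and the membership of the problem in \NP\ are, by contrast, routine.
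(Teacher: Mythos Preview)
Your plan matches the paper's proof in both directions: thread each gadget in $m$ moves starting from the non-cover central interval, then absorb the leftovers in $\msize{S}$ moves; conversely, charge $m$ moves per gadget via the private path colours and read a vertex cover off the surplus.

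There is one point where your mechanism is off. In the forward direction you propose to ``recolour these leftovers to $b$'' and suggest that the twin structure of $J_i,J'_i$ or the ``arrangement of equal-coloured central intervals'' is what lets this be done in $\msize{S}$ rather than $m$ operations. That does not work: after threading, two leftover intervals sharing a colour $s\in S$ sit in different gadgets, are not adjacent to one another, and hence are not in the same $s$-component; recolouring one of them to $b$ therefore costs one move per leftover, i.e.\ $m$ moves in total. The paper's fix is to go the other way: for each $s\in S$, recolour the (now connected) \emph{backbone} to $s$. This single move absorbs every leftover of colour $s$ simultaneously, since each such leftover is adjacent to the backbone through its twin $J'_i$ (which was swallowed during threading). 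That is what buys you $\msize{S}$ instead of $m$; it is the connectedness of the backbone, not the layout of the central pairs, that matters here.

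For the backward direction your outline coincides with the paper's. The paper argues that for every edge $e_i=\{u,v\}$ the solution must contain $(u,w_1^i)$ or $(v,w_1^i)$ (else that gadget alone already forces more than $m$ moves that help no other gadget), calls the first-used endpoint \emph{selected}, and shows the unselected endpoints form a vertex cover $R$ whose ``cover operations'' are disjoint from the $m^2$ threading moves, giving $\msize{R}\le k'$. Your charging argument is the same idea phrased slightly differently.
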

\begin{proof}
We first suppose that the graph $G$ has a vertex cover $S$ of size $k'$.
Then we can construct a sequence of coloring operations of length $m^2+k'$ as follows.
First step is joining the backbones.
Let $e_i=\{u,v\}$ be an edge in $E$. 
Since $S$ is a vertex cover, without loss of generality, we assume $u\in S$.
Then we color $v$ by $w_1^i, w_2^i, \ldots, w_{m-1}^i$, and $b$ 
(we do not mind if $v$ is in $S$ or not).
Repeat this process for every edge. 
Then all the backbones are connected and colored by $b$ after $m^2$ colorings.
We then still have $m$ intervals corresponding to the vertices in $S$.
Thus we pick up each vertex $v$ in $S$ and color the backbone by $col(v)$.
After $\msize{S}$ colorings, all vertices become monochrome.

Next we suppose that we have a sequence of coloring operations of length $m^2+k'$
that makes the representation monochrome. 
We extract a vertex cover of size $k'$ from these operations.
In the representation, for each $i$ with $1\le i\le m$,
we have $2$ distinct paths $(w_1^i,w_2^i,\ldots,w_{m-1}^i)$.
Hence we have $2m$ distinct paths in total, 
and each of them requires $m$ coloring operations.
Since $k'$ is the (potential) size of a vertex cover, 
we can assume that $k'<m$ without loss of generality.
First, we observe that the sequence of coloring operation 
includes $(v, w_1^i)$ or $(u, w_1^i)$ for each edge $e_i=(u,v)$.
Otherwise, we need more than $m$ coloring operations to connect 
the neighboring backbones (colored $b$). 
The operations never help to connect other backbones.
Thus the length of any sequence is no less than $m^2+m$.
Therefore, we can see either $(v, w_1^i)$ or $(u, w_1^i)$ appears in the sequence.
We say that $v$ is {\em selected} if $(v,w_1^i)$ appears
before $(u, w_1^i)$ (or $(u, w_1^i)$ may not appear).

Let $R$ be the set of vertices $v$ such that it is not selected in some edge $e_i=(v,u)$.
Then $R$ is a vertex cover.
Since the sequence makes all the vertices monochrome,
 the sequence includes either $(v,w_1^i)$ or $(*,v)$ 
 for each unselected vertex $v$ and edge $e_i=(v,u)$.
We call such operations {\em cover operations}.
Thus, the number of cover operations is no less than $\msize{R}$.
Remind that $m^2$ operations are needed to connect the selected intervals and paths, 
 and these operations are either
 of form $(*, w^i_j)$ or $(u, *)$ for selected $u$.
This implies that the length of the sequence is no less than
 $m^2+\msize{R}$, and thus $\msize{R} = k'$.
\QED\end{proof}

The reduction can be done in polynomial time. 
Hence, by Lemma \ref{lem:reduction}, Theorem \ref{th:NP}(2) immediately follows.

\subsection{Polynomial time algorithm on interval graphs for fixed color}

We first show an algorithm for proper interval graphs
that runs in polynomial time if the number of colors is fixed.
Next we extend the algorithm to deal with general interval graphs.

\subsubsection{Algorithm for a proper interval graph}
Let $\calI(G)$ be an interval representation of the proper interval graph $G=(V,E)$.
The interval representation is given in a compact form (see \cite{UeharaUno2007} for details).
Precisely, each endpoint is a positive integer, $N[p]\neq N[p+1]$ for each integer $p$,
and there are no indices $N[p]\subset N[p+1]$ or vice versa 
for each integer $p$ with $N[p]\neq\emptyset$ (otherwise we can shrink it).
Intuitively, each integer point corresponds to a set of different endpoints of the intervals
since the representation has no redundancy.
Then, it is known that $\calI(G)$ is unique up to 
isomorphism when $G$ is a proper interval graph (see \cite{SaitohYamanakaKiyomiUehara2010}),
and $\calI(G)$ can be placed in $[0..P]$ for some $P\le 2n-1$.
Sweeping a point $p$ from $0$ to $P$ on the representation, 
the color set $N[p]$ differs according to $p$.
More precisely, we obtain $2P+1$ different color sets 
for each $p=0, 0.5, 1, 1.5, 2, 2.5,\ldots,P-0.5, P$.
We note that for each integer $p$, $N[p+0.5]\subset N[p]$ and $N[p+0.5]\subset N[p+1]$.
Let $S_i$ be the color set obtained by the $i$th $p$ 
(to simplify the notation, we use from $S_0$ to $S_{2P}$).
Since the color set $C$ has size $k$, each $S_i$ consists of at most $k$ colors.
That is, the possible number of color sets is $2^k-1$ (since $S_i\neq\emptyset$).

Now we regard the unique interval representation as
 a path $\calP=(\hat{S_0},\hat{S_1},\ldots,\hat{S_{2P}})$,
 where each vertex $\hat{S_i}$ is precolored by the color set $S_i$.
Then we can use a dynamic programming technique,
 which is based on the similar idea to the algorithms for the flooding game
 on a path in \cite{FukuiNakanishiUeharaUnoUno2011,LagoutteNaualThierry2011}.
On a path, the correctness of the strategy comes from the fact 
 that removing the color at the point $p$ divides 
 the interval representation into left and right,
 and they are independent after removing the color at the point $p$.
However, on $\calP$, we have to take care of the influence of changing 
 the color set of a vertex in the original interval graph.
In the algorithms for an ordinary path,
 changing the color of a vertex has an influence to just two neighbors.
In our case, when we change a color $c$ in $S_i$ at a point $p$ to $c'$, 
all reachable color sets joined by $c$ from $\hat{S_{i}}$ are changed.
Thus we have to remove $c$ from $S_j$ and add $c'$ to $S_j$ for each $j$ with $i'\le j\le i''$,
where $i'$ and $i''$ are the leftmost and the rightmost vertices such that
$c\in \cap_{i'\le j\le i''} S_{j}$.
By this coloring operation, some colors may be left independent on the backbone of color $c'$. 
%
To deal with these color sets, 
we maintain a table $f(\ell,r,c,S)$ that is the minimum number of coloring operations to
satisfy the following conditions:
(1) $c\in S_{i}$ for each $i$ with $\ell\le i\le r$, and
(2) $\cup_{\ell\le i\le r}S_{i}\subseteq (S\cup \{c\})$.
That is, $f(\ell,r,c,S)$ gives the minimum number of coloring operations
to make this interval connected by the color $c$, and
the remaining colors in this interval are contained in $S$.
Once we obtain $f(0,2P,c,S)$ for all $c$ and $S$ on $\calP$,
we can obtain the solution by the following lemma:
\begin{lemma}
\label{lem:propercor}
For a given proper interval graph $G$, let $f(\ell,r,c,S)$ be the table defined above.
Then the minimum number of coloring operations to make $G$ monochrome is 
given by $\min_{c,S}(f(0,2P,c,S)+\msize{S})$.
\end{lemma}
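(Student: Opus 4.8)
The plan is to prove Lemma~\ref{lem:propercor} by showing that the two quantities coincide through a pair of inequalities, each exhibiting that a solution of one kind yields a solution of the other of the appropriate length. The key observation is that in any solution that makes $G$ monochrome, the final common color is some color $c$, and the backbone of that color must eventually span the entire path $\calP=(\hat{S_0},\ldots,\hat{S_{2P}})$, i.e.\ we must have $c\in S_i$ for every $i$ with $0\le i\le 2P$. This is exactly condition~(1) in the definition of $f(0,2P,c,S)$, so the table value $f(0,2P,c,S)$ already captures the cost of connecting everything by $c$ while confining the remaining colors to the set $S$.

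First I would establish that $\min_{c,S}(f(0,2P,c,S)+\msize{S})$ is an upper bound on the optimum. Given $c$ and $S$ achieving the minimum of $f(0,2P,c,S)$, by definition there is a sequence of $f(0,2P,c,S)$ coloring operations after which $c\in S_i$ for all $i$ (so the whole path is connected through color $c$) and every color appearing anywhere in the representation lies in $S\cup\{c\}$. At that point the connected $c$-component spans the entire interval representation, and the only remaining colors are those in $S$. I would then argue that each color in $S$ can be eliminated by a single coloring operation: since the $c$-backbone is connected and touches the whole path, we recolor each leftover color $s\in S$ to $c$ in one move, because those residual regions are adjacent to the spanning $c$-component. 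This gives a total of $f(0,2P,c,S)+\msize{S}$ operations producing a monochrome graph, so the optimum is at most the stated minimum.

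Conversely, I would show the stated minimum is a lower bound on the optimum. Take any optimal solution making $G$ monochrome, and let $c$ be the resulting color. Consider the first moment in the solution at which color $c$ forms a connected backbone spanning all of $\calP$ (such a moment exists, at the latest at the end). Split the solution there: the prefix is a sequence of operations after which $c\in S_i$ for all $i$; let $S$ be the set of colors other than $c$ that remain in the representation at that moment. Then the prefix witnesses the conditions in the definition of $f(0,2P,c,S)$, so its length is at least $f(0,2P,c,S)$; and the suffix must eliminate every color in $S$, which requires at least $\msize{S}$ further operations. Hence the total length is at least $f(0,2P,c,S)+\msize{S}\ge\min_{c,S}(f(0,2P,c,S)+\msize{S})$.

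The main obstacle I anticipate is justifying cleanly that each remaining color in $S$ genuinely requires (and admits) exactly one operation in the upper-bound direction, and dually that the suffix cannot ``share'' work in the lower-bound direction. The subtle point is that a single recoloring operation at a point $p$ changes a color $c'$ throughout its entire connected monochromatic region, so one must verify that after the $c$-backbone spans $\calP$, the residual $S$-colored regions are each reachable and are recolored to $c$ one at a time without interference, and that no single operation can remove two distinct colors of $S$ at once. This is precisely where the proper-interval structure and the nesting property $S_{i+0.5}\subset S_i$, $S_{i+0.5}\subset S_{i+1}$ are used to guarantee that the path abstraction $\calP$ faithfully tracks connectivity, so that the table $f$ and the additive $\msize{S}$ term account for all operations exactly once.
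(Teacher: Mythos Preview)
Your overall two-inequality plan is the same as the paper's, and your lower-bound argument is in fact cleaner than what the paper does. The paper argues informally that any coloring move on an interval $[i,j]$ can be ``overridden'' by a move on a larger interval $[i',j']$ already dominated by some color $c''$, and concludes from this that the backbone-first strategy is never worse than any other. You instead split an optimal solution at the first time the final color $c$ satisfies $c\in S_i$ for all $i$: the prefix then witnesses the defining conditions of $f(0,2P,c,S)$ for the set $S$ of colors still present, and the suffix must shrink a palette of $\msize{S}+1$ colors down to one, which takes at least $\msize{S}$ moves because a single operation $(v,c')$ can remove at most the one color $col(v)$ from the palette. That counting argument is more transparent than the paper's overriding sketch.

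There is, however, a real gap in your upper-bound direction. You propose, once the $c$-backbone spans $\calP$, to ``recolor each leftover color $s\in S$ to $c$ in one move, because those residual regions are adjacent to the spanning $c$-component.'' But adjacency to the backbone is not what the operation uses: applying $(v,c)$ with $col(v)=s$ recolors only the $s$-connected component $N_s(v)$, and nothing in the definition of $f$ prevents a residual color $s$ from occupying several disconnected $s$-components at that stage. One move then does not eliminate $s$, and your $\msize{S}$ bound fails. The paper avoids this by recoloring in the \emph{opposite} direction: it picks any vertex of color $c$ and recolors the backbone to $s$. Since the backbone is a single connected $c$-component and every $s$-vertex is adjacent to it, one move merges all $s$-vertices into the (now $s$-colored) backbone; iterating over the colors in $S$ then costs exactly $\msize{S}$ moves. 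Your ``main obstacle'' paragraph senses the difficulty but does not resolve it; reversing the direction of the recoloring does.
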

\begin{proof}
We first show that we can make $G$ monochrome within 
$\min_{c,S}(f(0,2P,c,S)+\msize{S})$ coloring operations.
For each $c$ and $S$, by the definition of the table, 
we can make that every color set $S_i$ contains $c$ with $f(0,2P,c,S)$ coloring operations.
This means that every vertex $v$ is either $col(v)=c$ or $N(v)$ contains some $u$ such that 
$col(u)=c$, and $G[N_c(v)]$ is connected. Therefore, taking each color $c'\in S$,
and changing the color of any vertex of color $c$ to $c'$, all vertices of color $c'$ and $c$
are merged to the vertices of color $c'$.
Therefore, repeating this process, we can make $G$ monochrome with 
$\min_{c,S}(f(0,2P,c,S)+\msize{S})$ coloring operations.

We next show that the above strategy cannot be improved.
In the definition of the function $f$, we take a strategy that 
(1) first, color the interval $[i,j]$ with a color $c$ and 
(2) second, color the remaining colors in the interval $[i,j]$ by changing the color $c$.
We say that this color $c$ {\em dominates} the interval $[i,j]$ after the first step.
We suppose that a coloring operation pick up a color $c$ of a vertex $v$ and change it to 
another color $c'$. Then the color sets in an interval $[i,j]$ are changed since 
$i\le L(I_v)\le R(I_v) \le j$, 
$i$ is the leftmost vertex such that $S_i$ contains $c$, and 
$j$ is the rightmost vertex with $c\in S_j$.
Here we suppose that $[i,j]$ is properly contained in another 
interval $[i',j']$ with $i'\le i\le j\le j'$ that is dominated by a color $c''$.
Then, we can override to use the color $c''$ instead of $c$ in the sense that
changing $c$ to $c'$ is not better than changing $c''$ to $c'$.
That is, when we change a color $c$ to $c'$, if there is another overriding color $c''$,
it is not worse to change a color $c''$ to $c'$ instead of $c$.
Repeating this argument, we can see that the above strategy is not worse any other strategy.
Thus we cannot improve it.
\QED\end{proof}

This function satisfies the following recursive relation.
\[
\begin{array}{lcl}
\lefteqn{f(\ell,r,c,S) = \min\{ }\\
\min_{\ell<i\le r,c'\in C\setminus\{c\}} f(\ell, i-1, c, S')+f(i, r, c', S'')+1 &\mbox{such that} & S', S'' \subseteq S\cup \{c\}\\
\min_{\ell<i\le r,,c'\in C\setminus\{c\}} f(\ell, i-1, c', S')+1 + f(i, r, c, S'')&\mbox{such that}&S', S'' \subseteq S\cup \{c\}\\
\min_{\ell<i\le r} f(\ell, i-1, c, S')+f(i, r, c, S'') &\mbox{such that}&S', S'' \subseteq S\\
\}&&
\end{array}
\]
The correctness of this dynamic programming algorithm is given by Lemma \ref{lem:propercor}.
Thus the remaining task is showing the computational complexity of the function.
\begin{lemma}
\label{lem:algproper}
The value of $\min_{c,S}(f(0,2P,c,S)+\msize{S})$ can be computed in $O(4^k k^2 n^3)$ time.
\end{lemma}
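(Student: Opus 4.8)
The plan is to bound the total running time by (i) counting the distinct table entries $f(\ell,r,c,S)$, (ii) bounding the work spent to evaluate the recurrence that fills each entry, and then verifying that the dominant term is exactly $O(4^k k^2 n^3)$. First I would pin down the ranges of the four parameters. Since $\calI(G)$ is placed in $[0..P]$ with $P\le 2n-1$, the index $i$ of the swept color sets $S_i$ ranges over $\{0,1,\ldots,2P\}$, so there are $2P+1=O(n)$ admissible indices and hence $O(n^2)$ choices for the pair $(\ell,r)$. The color $c$ contributes a factor $k$, and the leftover set $S\subseteq C$ contributes a factor $2^k$. Consequently the table has $O(n^2 k 2^k)$ entries, and the final answer $\min_{c,S}(f(0,2P,c,S)+|S|)$ is read off in $O(k 2^k)$ time once the table is filled, which is dominated by the cost of filling it.

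Next I would organize the computation by increasing interval length $r-\ell$, so that whenever the recurrence splits $[\ell,r]$ at an index $i$ into $[\ell,i-1]$ and $[i,r]$ the two smaller entries are already available; the base cases $\ell=r$ (a single swept set $S_\ell$) are evaluated directly in $O(1)$ each. The crucial accounting is for the three combination rules. Rather than looping over a target $S$ and then searching for admissible $(S',S'')$, I would evaluate the recurrence in the \emph{forward} direction: for each triple $(\ell,i,r)$ with $\ell\le i\le r$, each ordered pair of colors $(c,c')$, and each ordered pair of leftover sets $(S',S'')$, I combine $f(\ell,i-1,\cdot,S')$ with $f(i,r,\cdot,S'')$ as prescribed by the relevant case and relax the entry of $[\ell,r]$ indexed by $c$ and by the appropriate union of $S',S''$ (together with $c'$ in the merging cases). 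The number of triples $(\ell,i,r)$ is $O(n^3)$, the pairs $(c,c')$ number $O(k^2)$, and the pairs $(S',S'')$ number $(2^k)^2=O(4^k)$; this is precisely where the factor $4^k$ originates.

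Assuming each relaxation step costs $O(1)$ --- which holds because the set operations $S'\cup S''$, the subset tests, and the $\min$ updates are performed on fixed-size objects (a $k$-bit mask manipulated in constant time on the word RAM, or a constant-time operation since $k$ is the fixed parameter) --- the total work to fill the table is $O(n^3\cdot k^2\cdot 4^k)=O(4^k k^2 n^3)$, which dominates both the base-case cost $O(n k 2^k)$ and the final read-off. The main obstacle I anticipate is not the counting itself but verifying that the forward combination faithfully implements the three-case recurrence displayed above: I must check that every admissible split is generated by exactly some tuple $(\ell,i,r,c,c',S',S'')$, that monotonicity of $f$ in its set argument lets me relax into $f(\ell,r,c,S'\cup S'')$ without also ranging over all supersets of the union, and that the ``$+1$'' merging operations are attributed correctly so that no valid coloring strategy is missed and no invalid one is counted. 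Once this correspondence is established, the complexity bound follows immediately from the tuple count.
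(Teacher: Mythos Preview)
Your proposal is correct and follows essentially the same dynamic-programming approach as the paper: fill the table $f(\ell,r,c,S)$ in order of increasing $r-\ell$, and account for $O(n^3)$ triples $(\ell,i,r)$, $O(k^2)$ color pairs, and a $4^k$ factor from the leftover sets. The only cosmetic difference is that the paper computes each entry backward (for fixed $(\ell,r,c,S)$ it minimizes over $S',S''\subseteq S\cup\{c\}$, using independence of the two minimizations to get $2^{|S|+1}+2^{|S|+1}$ rather than a product), whereas you push forward over all pairs $(S',S'')$ and rely on monotonicity of $f$ in $S$; both routes yield the same $O(4^k k^2 n^3)$ bound.
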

\begin{proof}
This can be done in a standard dynamic programming technique.
Initialization step is that, for each $i$, 
$f(i,i,c,S)=0$ if $c\in S$ and $f(i,i,c,S)=1$ if $c\not\in S$ .
This step requires $(2P+1)\times k\times 2^k=O(2^k kn)$.
We also define $f(i,j,c,S)=0$ for any $j<i$ for convenience.

In general step, the algorithm computes $f(\ell,r,c,S)$ 
for each pair $\ell$ and $r$ with $\ell <r$. 
The algorithm computes all pairs $\ell$ and $r$ in 
the order $r-\ell=1$, $r-\ell=2$, $r-\ell=3$, $\ldots$, $r-\ell=2P$. 
For a pair $\ell,r$ with $\ell<r$, the algorithm next fix the color $c$. 
Then the algorithm generates all possible subsets $S$ of $C$. 
Using the above recursive relation, 
a value of $f(\ell,r,c,S)$ can be computed in 
$O(k(\ell-r)(2^{\msize{S}+1}+ 2^{\msize{S}+1}))$ time since
the values $f(\ell, i-1, c, S')$ and $f(i, r, c, S'')$ can be computed independently.
Therefore, in total, $f(0,2P,c,S)$ can be computed in 
$O(n^2\cdot k\cdot 2^k \cdot k\cdot n \cdot 2^{k})=O(4^k k^2 n^3)$ time.
This completes the proof.
\QED\end{proof}


In the proof, we assume that whether a color $c$ is in a color set $S$ or not can be 
determined in $O(1)$ time since $k$ is fixed.
Even in the case that $k=\msize{S}$ is large, say $O(n)$, 
the running time of the algorithm is bounded by $O(4^k k^2 (\log k) n^3 )$.


\subsubsection{Extension to interval graphs}
\label{sec:exint}

A proper interval graph has a simple interval representation.
Especially, its interval representation is linear and essentially unique up to isomorphism.
Therefore we can use the dynamic programming technique on the unique path-like structure.
On the other hand, a general interval graph has exponentially 
many different interval representations.
To deal with an interval graph, we use a tree representation 
that was used to solve the graph isomorphism problem for interval graphs \cite{KM89}.
The {\em $\MPQ$-tree} stands for {\em modified $\P\Q$-tree}, 
and this notion was introduced by Korte and M\"ohring in \cite{KM89}.
For an interval graph, the $\MPQ$-tree is uniquely determined up to isomorphism.
To solve the flooding problem on an interval graph,
we extend the algorithm for proper interval graph to solve 
the problem on the $\MPQ$-tree of color sets.
The $\MPQ$-tree maintains inclusion relationships among intervals.
That is, if an interval $I$ appears in a node $p$ that is an ancestor
of another node $q$, all intervals appearing in the node $q$ is properly contained in 
the interval $I$. Thus, by the same argument of the proof of Lemma \ref{lem:propercor},
a coloring operation for the intervals appearing in the node $q$ 
can be overridden by the coloring operation of $I$. 
Therefore, in the same manner of the algorithm for the color sets of proper interval graphs,
it is enough to consider the coloring operation of $I$, 
and the other intervals properly contained in $I$ will be dealt with as 
the set of remaining colors in $S$ in the function $f(0,2P,c,S)$ in the previous algorithm.
We discuss the details hereafter.

\paragraph{Definitions and Notations for {\MPQ}-trees:}
The notion of $\P\Q$-trees was introduced by Booth and Lueker \cite{BL76}.
A {\em $\P\Q$-tree} is a rooted tree $T$ with two types of 
internal nodes $\P$ and $\Q$, which will be represented by circles and rectangles, respectively.
The leaves of $T$ are labeled one-to-one with the maximal cliques
of the interval graph $G$. The {\em frontier} of a $\P\Q$-tree $T$ is the
permutation of the maximal cliques obtained by the ordering of the
leaves of $T$ from left to right. 
$\P\Q$-tree $T$ and $T'$ are {\em equivalent}, 
if one can be obtained from the other by applying the 
following rules a finite number of times;
\begin{description}
 \item[{\rm (1)}] arbitrarily permute the successor nodes of a $\P$-node, or
 \item[{\rm (2)}] reverse the order of the successor nodes of a $\Q$-node.
\end{description}
In \cite{BL76}, Booth and Lueker showed that a graph $G$ is an interval
graph if and only if there is a $\P\Q$-tree $T$ whose frontier represents
a consecutive arrangement of the maximal cliques of $G$.
%
In other words, if $G$ is an interval graph, all consecutive arrangements of the
maximal cliques of $G$ are obtained by taking equivalent $\P\Q$-trees.
%

The $\MPQ$-tree model, which stands for {\em modified $\P\Q$-tree}, 
is developed by Korte and M\"ohring to simplify the construction of 
a $\P\Q$-tree (see \figurename~\ref{fig:interval}(c) for an example). 
The $\MPQ$-tree $T^*$ assigns sets of vertices (possibly empty) to the
nodes of a $\P\Q$-tree $T$ representing an interval graph $G=(V,E)$.
A $\P$-node is assigned only one set, while a $\Q$-node has a set for each
of its sons (ordered from left to right according to the ordering of the sons).
For a $\P$-node $P$, this set consists of those vertices of $G$ contained
in all maximal cliques represented by the subtree of $P$ in $T$, but in no other cliques.

For a $\Q$-node $Q$, the definition is more involved.
Let $Q_1,\cdots,Q_m$ be the set of the sons (in consecutive order) of $Q$,
and let $T_i$ be the subtree of $T$ with root $Q_i$ (note that $m\ge 3$).
We then assign a set $S_i$, called {\em section}, to $Q$ for each $Q_i$.
Section $S_i$ contains all vertices that are contained in all maximal
cliques of $T_i$ and some other $T_j$, but not in any clique belonging
to some other subtree of $T$ that is not below $Q$.
The key property of $\MPQ$-trees is summarized as follows:
\begin{theorem}[{\cite[Theorem 2.1]{KM89}}]\label{th:MPQ}
Let $T$ be a $\P\Q$-tree for an interval graph $G=(V,E)$ 
and let $T^*$ be the associated $\MPQ$-tree. Let $\msize{V}=n$ and $\msize{E}=m$.
Then we have the following:
\begin{description}
\item[{\rm (a)}] $T^*$ can be obtained from $T$ in $O(n+m)$ time and represents $G$ in $O(n)$ space.
\item[{\rm (b)}] Each maximal clique of $G$ corresponds to a path in $T^*$ from the
 root to a leaf, where each vertex $v\in V$ is as close as possible to the root.
\item[{\rm (c)}] In $T^*$, each vertex $v$ appears in either one leaf, one $\P$-node, 
 or consecutive sections $S_i,S_{i+1},\cdots,S_{i+j}$ for some $\Q$-node with $j>0$.
\end{description}
\end{theorem}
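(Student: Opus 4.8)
The statement is the structure theorem for $\MPQ$-trees, and I would prove its three parts by reducing everything to the consecutive-arrangement property guaranteed by Booth and Lueker. The plan is to first fix, for each vertex $v$, the set $\kappa(v)$ of maximal cliques of $G$ containing $v$. The defining property of a $\P\Q$-tree $T$ is that its frontier orders the maximal cliques so that $\kappa(v)$ is a \emph{consecutive} block for every $v$, and that this consecutiveness persists under every equivalent $\P\Q$-tree (permuting $\P$-node children, reversing $\Q$-node children). Identifying maximal cliques with leaves of $T$, each $\kappa(v)$ corresponds to a consecutive set $\Lambda(v)$ of leaves. I would then define the placement of $v$ to be the unique highest location in $T$ whose subtree-leaves ``capture'' $\Lambda(v)$, and phrase ``as close to the root as possible'' as the requirement that $v$ is not stored in any proper descendant that still captures $\Lambda(v)$. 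The bulk of the proof is to show that this placement is well defined and falls into exactly the three cases of part (c).

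For part (c) I would case-split on how $\Lambda(v)$ sits in $T$. If $\Lambda(v)$ is a single leaf, $v$ belongs to that leaf. If $\Lambda(v)$ is exactly the leaf-set of the subtree rooted at a $\P$-node $P$ (and $v$ lies in no clique outside it), then $v$ is stored at $P$: maximality forbids pushing it lower, and the fact that $v$ misses some clique outside $P$'s subtree forbids pushing it higher. The remaining case is a $\Q$-node $Q$ with ordered children subtrees $T_1,\dots,T_m$, where $\Lambda(v)$ meets several of the $T_i$ but is contained in none of them. The crucial lemma is a consecutiveness claim: if $v$ lies in \emph{all} cliques of $T_i$ and in \emph{all} cliques of $T_{i'}$ with $i<i'$, then for every $i''$ with $i<i''<i'$ the cliques of $T_{i''}$ lie between those of $T_i$ and $T_{i'}$ in the frontier order, so consecutiveness of $\kappa(v)$ forces $v$ into all of them. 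Hence the index set $\{\, i : v \text{ lies in every clique of } T_i \,\}$ is an interval, which is exactly the block of sections $S_i,\dots,S_{i+j}$; the boundary subtrees, in which $v$ may appear only partially, explain why the section definition demands membership in \emph{all} cliques of $T_i$ together with \emph{some} other $T_j$, and $j>0$ holds because the single-subtree situation was already absorbed into the leaf and $\P$-node cases.

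Part (b) would then follow almost by unwinding the placement rule: fixing a leaf $L$ and walking the root-to-$L$ path, collecting the vertices stored at each node along the way (and, at each $\Q$-node, the section on the side of $L$), reconstructs precisely the maximal clique labelling $L$, and ``as close to the root as possible'' is exactly the maximality built into the placement. For part (a), the $O(n)$ space bound is immediate from (c): each vertex is stored in one leaf, one $\P$-node, or a \emph{range} of consecutive $\Q$-sections, so encoding that range by its two endpoints stores every vertex $O(1)$ times, and $T$ has $O(n)$ nodes since an interval graph on $n$ vertices has at most $n$ maximal cliques. The $O(n+m)$ construction time comes from building $T$ itself in $O(n+m)$ via Booth--Lueker and then computing all placements in a single augmentation pass that, for each vertex, locates the extent of $\Lambda(v)$.

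The main obstacle is the $\Q$-node analysis: proving the consecutiveness lemma rigorously, and checking that the sections it produces coincide exactly with Korte and M\"ohring's definition, require careful use of the fact that consecutiveness of $\kappa(v)$ survives all $\Q$-node reversals rather than merely holding for one fixed frontier. Closely tied to this is establishing that the placement is \emph{unique} --- that no vertex can legitimately be assigned to two different nodes --- which rests on the uniqueness of the $\P\Q$-tree up to the two equivalence operations. Once well-definedness and the consecutiveness lemma are in hand, parts (a) and (b) reduce to bookkeeping.
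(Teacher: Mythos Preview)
The paper does not actually prove this theorem; it is quoted from Korte and M\"ohring, as the citation in the theorem header indicates. The only argument the paper supplies is a one-sentence remark immediately after the statement: part~(c) was not stated explicitly in that reference but ``is immediately obtained from the fact that the maximal cliques containing a fixed vertex occur consecutively in $T$.'' Your proposal is a substantially fleshed-out version of precisely that idea --- using the consecutiveness of $\kappa(v)$, and its invariance under all $\P\Q$-tree equivalences, to pin down the unique node or run of sections where $v$ is stored --- so for part~(c) you are aligned with the paper's hint, just with far more detail. For parts~(a) and~(b) there is nothing to compare against: the paper defers entirely to the original source, and your reconstruction of those parts (the $O(n)$ encoding via section-range endpoints, the root-to-leaf path recovering a maximal clique) is reasonable but goes well beyond anything the present paper attempts.
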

Property (b) is the essential property of $\MPQ$-trees.
For example, the root of $T^*$ contains all vertices belonging to all
maximal cliques, and the leaves contain the simplicial vertices of $G$.
In \cite{KM89}, they did not state Theorem \ref{th:MPQ}(c) explicitly.
However, Theorem \ref{th:MPQ}(c) is immediately obtained from the fact
that the maximal cliques containing a fixed vertex occur consecutively in $T$.

In order to solve the graph isomorphism problem, a $\P\Q$-tree has
additional information which is called {\em characteristic node} in \cite{LB79,CB81}. 
This is the unique node which roots
the subtree whose leaves are exactly the cliques to which the vertex belongs. 
As noted in \cite[p.~212]{CB81}, the term characteristic node to mean the leaf, 
$\P$-node, or portion of a $\Q$-node which contains those cliques. 
Each vertex $v$ in $\MPQ$-tree directly 
corresponds to the characteristic node in the $\P\Q$-tree.
Although they did not discuss the uniqueness of $\MPQ$-tree in \cite{KM89},
their algorithm certainly constructs the unique $\MPQ$-tree for 
a given interval graph up to isomorphism \cite{Uehara2003}.


\paragraph{Algorithm for an interval graph:}
Let $G=(V,E)$ be a connected interval graph with $\msize{V}=n$ and $\msize{E}=n$
and $T^*$ the $\MPQ$-tree of $G$.
Let $U$ be the set of vertices in $G$ that appears in the root node of $T^*$.
\begin{lemma}
\label{lem:MPQ}
(1) $G[U]$ is a connected interval graph.
(2) The interval representation of $G[U]$ is unique up to isomorphism.
\end{lemma}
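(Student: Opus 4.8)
The plan is to prove the two parts separately, splitting on the type of the root node of $T^*$ and using Theorem~\ref{th:MPQ} throughout: every maximal clique of $G$ is a root-to-leaf path, each vertex sits as high as possible, and, by part (c), a vertex placed on a $\Q$-node occupies a block of \emph{consecutive} sections.

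For part (1), that $G[U]$ is an interval graph is immediate, since interval graphs are hereditary and $G[U]$ is an induced subgraph of $G$; the content is connectivity, which I would argue by cases on the root. If the root is a leaf or a $\P$-node, then every vertex of $U$ lies in all maximal cliques below the root, so any two of them share a maximal clique and are adjacent; hence $G[U]$ is a clique and is connected. If the root is a $\Q$-node with sections $S_1,\dots,S_m$, then $U=\bigcup_i S_i$ and each $v\in U$ occupies a consecutive block of sections $[a_v,b_v]$. Two vertices lying in a common section are contained in a common maximal clique of the corresponding subtree, hence are adjacent; moreover consecutive sections $S_i,S_{i+1}$ always share a vertex, namely the interval straddling the boundary between the two adjacent subtrees (it belongs to cliques of two distinct children, so it cannot be pushed below the $\Q$-node and therefore lies in $S_i\cap S_{i+1}$; non-emptiness here is exactly what forces the linear order of the $\Q$-node and is guaranteed because $G$ is connected). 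Choosing one shared vertex for each $i$ produces, for any two vertices of $U$, a walk inside $U$ joining them, so $G[U]$ is connected.

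For part (2) the key fact is that the $\MPQ$-tree $T^*$ of $G$ is unique up to isomorphism, the only freedom being permutation of the sons of a $\P$-node and reversal of the sons, hence of the sections, of a $\Q$-node. I would read a canonical representation of $G[U]$ directly off the root. If the root is a $\P$-node (or a leaf), $G[U]$ is a clique and its representation is forced (all intervals share a common point), so it is unique. If the root is a $\Q$-node, I assign to each $v\in U$ the interval $I_v=[a_v,b_v]$ determined by its block of sections. The analysis in part (1) shows that $u,v\in U$ are adjacent in $G$ exactly when their section blocks intersect, so this assignment is a genuine interval representation of $G[U]$; and because the order of the sections is fixed by $T^*$ up to reversal, this representation is determined up to reflection, i.e.\ unique up to isomorphism.

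The step I expect to be most delicate is the $\Q$-node analysis in part (2): one must check that adjacency in $G[U]$ is captured \emph{exactly} by overlap of section blocks (the forward direction uses Theorem~\ref{th:MPQ}(c) together with the definition of a section, while non-adjacency uses that maximal cliques of distinct children are pairwise disjoint), and that the linear order of the sections is a true isomorphism invariant of $G$ rather than an artefact of one particular drawing. This is precisely the one-dimensional, proper-interval-like structure that later allows the dynamic program of Lemma~\ref{lem:propercor} to be run along the sections of the root exactly as it was run along the path $\calP$ for proper interval graphs, with the intervals nested strictly below the root absorbed into the residual color set $S$.
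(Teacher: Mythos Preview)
Your proposal is correct and follows the same case split on the root type ($\P$-node versus $\Q$-node) as the paper. The paper's own proof is considerably terser---it handles connectivity in the $\Q$-node case by the one-line contradiction ``if $G[U]$ is not connected, $G$ is also disconnected,'' and dispatches uniqueness with ``by definition, this $\Q$-node corresponds to a unique interval representation of $U$''---whereas you unpack both of these via the consecutive-sections structure of Theorem~\ref{th:MPQ}(c), which is a welcome elaboration rather than a different route.
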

\begin{proof}
If $U$ is an empty set, $G$ is not connected (unless $G$ contains no vertex).
Hence $U$ contains at least one vertex. 
If the root is a $\P$-node, $G[U]$ is a clique, and its interval representation is unique.
Thus we consider the case that the root is a $\Q$-node.
By definition, this $\Q$-node corresponds to a unique interval representation of $U$.
If $G[U]$ is not connected, $G$ is also disconnected, 
which contradicts the assumption that $G$ is connected.
Thus we have the lemma.
\QED\end{proof}

A parent-child relationship on $T^*$ represents inclusion relationship.
That is, if a vertex $v$ is an ancestor of another vertex $u$ in $T^*$,
$I_v$ always contains $I_u$ in any interval representation for $G$.
Thus, for any interval representation of $G$, 
the union of the set of intervals corresponding to the vertices in $U$
contains all other intervals.
Moreover, any interval $I_u$ not in $U$ is properly contained in 
an interval $I_v$ in $U$.
Thus, using the same argument in the proof of Lemma \ref{lem:propercor},
we can claim that the coloring operation of $u$ is overridden by the coloring operation 
of $v$. That is, any sequence of coloring operations of intervals in an interval graph $G$
 can be overridden by a sequence of coloring operations that only consist of 
 the coloring operations of intervals in $U$.
Therefore, we can employ the same strategy for any interval graph $G$ as follows.
(1) First, color the intervals in $U$ and dominate the interval $[\min{}R,\max{}L]$
by some color $c$, 
where $\min{}R=\min\{R(I)\}$ and $\max{}L=\max\{L(I)\}$ for all intervals in $G[U]$.
We note that $G[U]$ has the unique interval representation by Lemma \ref{lem:MPQ}.
After (1), the intervals of color $c$ form the {\em backbone} of $G$ in the sense that
every vertex is either of color $c$ or adjacent to some vertices in the backbone.
(2) Second, color the remaining colors of the vertices by changing some vertex on the backbone.
Using the same arguments of the algorithm for proper interval graphs,
the correctness of the strategy follows.

Precise algorithm is as follows:
\begin{description}
\item[{\rm (a)}] For a given interval graph $G=(V,E)$, construct the $\MPQ$-tree $T$.
\item[{\rm (b)}] Pick up the root node of $T$, and let $U$ be the set of vertices
	   appearing in the root of $T$.
	   (Since $G$ is connected, we have $U\neq \emptyset$.)
\item[{\rm (c-1)}] 
 Case 1: $U$ is a $\P$-node. In this case, every vertex in $U$ is a universal vertex
	   that is adjacent to any other vertex.
	   Thus, pick up one vertex of $U$ and change color of it to each colors.
\item[{\rm (c-2)}] 
Case 2: $U$ is a $\Q$-node. 
Let $S_1,\ldots,S_k$ be the sections in $U$ in this ordering.
For each $i=1,\ldots,k$, 
let $S'_i$ be the set of colors of vertices in the subtree of the section $S_i$.
Then construct a path $\hat{P}=(\hat{S_1},\hat{S_2},\ldots,\hat{S_k})$, 
where $col(\hat{S_i})=col(S_i\cup S'_i)$. That is, each $\hat{S_i}$ is a color set
that consists of the colors of intervals in $S_i$ or $S'_i$.
Now, we apply the algorithm for the proper interval graph on the path $\hat{P}$.
\end{description}

Since the case (c-1) is an extreme case of the case (c-2), 
we concentrate on the case (c-2).
By Lemma \ref{lem:MPQ}, the interval representation for $U$ is unique,
and hence the sections are determined uniquely.
If we change the color of an interval $I$ through $S_i$,
it overrides all other intervals $I'$ in $S'_i$ since 
$I'$ appears in a descendant of $U$, which means that $I$ properly contains $I'$.
Therefore, without loss of generality, we can assume that a best strategy changes 
colors of intervals in $U$.
Therefore, all intervals $I'$ in $S'_i$ are dominated by any interval $I$ in $S_i$.
Hence we can concentrate to solve the problem on $G[U]$, and the other intervals 
contribute only as a color set in $G[U]$.
Thus we can apply the algorithm for the proper interval graph on $\hat{P}$, and 
obtain an optimal solution with the same time and space complexity in 
Lemma \ref{lem:algproper}, which completes the proof of Theorem \ref{th:poly}.


\section{Split Graphs}
In \cite{FleischerWoeginger2010}, the fixed flooding game on a split graph is investigated.
Using a similar idea in \cite{FukuiNakanishiUeharaUnoUno2011},
we can extend the results for the fixed flooding game to the free flooding game.
\begin{theorem}
\label{th:split}
(1) The free flooding game is $\NP$-complete even on a split graph.
(2) The free flooding game on a split graph can be solved in $O((k!)^2+n)$ time.
\end{theorem}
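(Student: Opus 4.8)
The plan is to prove the two parts separately, treating them as a hardness result and an algorithmic result. For part~(1), \NP-completeness, membership in \NP{} is immediate: a solution is a sequence of at most $t$ coloring operations, which is a polynomial-size certificate, and verifying that it makes the graph monochrome takes polynomial time. For hardness I would again reduce from Vertex Cover, mirroring the strategy already used for proper interval graphs and caterpillars in this section. The key observation about split graphs $G=(C\cup I,E)$ is that once any single vertex of the clique $C$ is recolored, that color floods across the entire clique in one operation, so the clique behaves almost like a single ``hub.'' First I would build, for an instance $(H,k)$ of Vertex Cover with edge set $E(H)$, a split graph whose clique side encodes a backbone and whose independent side $I$ encodes vertices/edges of $H$, assigning each vertex of $H$ its own private color. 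The reduction should be arranged so that each edge of $H$ forces at least one cover operation ``touching'' one of its endpoints, exactly as the selected/unselected-vertex argument in the proof of Lemma~\ref{lem:reduction} does; the number of extra operations beyond a fixed ``flooding the clique'' cost then equals the size of an extracted vertex cover. The main obstacle here is that the clique structure gives more flooding power than the path-like structures used earlier, so I must design the color gadgets carefully so that no single clique operation can simultaneously resolve two independent-set vertices that ought to correspond to distinct cover choices; keeping the colors distinct (one per vertex of $H$) is what blocks such shortcuts.

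For part~(2), the $O((k!)^2+n)$ algorithm when $k$ is fixed, the plan exploits the rigidity of split graphs together with the small number of colors. The first step is a preprocessing/normalization pass over the $n$ vertices in $O(n)$ time: compute, for each color $c$, whether $c$ appears on the clique side $C$ and which colors appear on the independent side $I$, collapsing the instance into a bounded-size description. Since there are only $k$ colors, the ``essential'' coloring state of the split graph is determined by a bounded amount of information—morally, which colors currently occupy the clique and how the independent-set vertices are grouped by color—and this description has size depending only on $k$, not on $n$. The second step is to observe, as in the proof of Lemma~\ref{lem:propercor}, that an optimal solution may be assumed to consist of operations applied through clique vertices (a clique operation overrides an independent-set operation, by the same domination argument). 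Because every optimal move can be charged to recoloring the clique to some color, an optimal strategy is captured by an ordering in which the colors are consolidated; searching over all such orderings of the at most $k$ relevant colors gives the $(k!)^2$ factor, where one factor accounts for the order in which clique colors are merged and the other for the interaction with the independent-set colors that must be absorbed.

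Concretely I would enumerate all permutations of the color set $C=\{1,\ldots,k\}$ (there are $k!$ of them) describing a candidate consolidation order, and for each such order compute in time polynomial in $k$ (independent of $n$, thanks to the normalization) the number of operations needed; pairing the clique-side order with the independent-side order yields the $(k!)^2$ bound, and adding the one-time $O(n)$ preprocessing gives the stated $O((k!)^2+n)$ running time. The correctness argument parallels Lemma~\ref{lem:propercor}: I would show that the best solution is realized by one of the enumerated orders by using the domination/override property to reduce any solution to a canonical ``clique-first'' form, and then argue that within that canonical form the cost depends only on the relative order in which colors are merged, so the finite enumeration is exhaustive. The main obstacle in part~(2) is proving that no information beyond the bounded color-state summary is needed—i.e.\ that two split-graph instances with the same summary have the same optimal value—so that the $O(n)$ preprocessing genuinely suffices and the combinatorial search really does range over a $k$-bounded space rather than an $n$-dependent one.
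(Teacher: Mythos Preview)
Your plan for part~(2) is essentially the paper's argument: restrict attention to operations on clique vertices (via a domination/override argument), observe that an optimal solution has length at most $2k$, and brute-force the resulting $k$-bounded search space after an $O(n)$ scan. The paper's bookkeeping for the $(k!)^2$ factor is phrased differently---at most $k'(k'-1)$ choices per step over at most $2k$ steps---but the substance matches.

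Part~(1), however, diverges from the paper and leaves a real obstacle unresolved. The paper does \emph{not} build a fresh reduction from Vertex Cover. It starts from the already-known hardness of the \emph{fixed} flooding game on split graphs, which \cite{FleischerWoeginger2010} obtains by reducing from feedback vertex set, with the fixed pivot a universal vertex $u$. The paper then attaches $|K|$ new degree-one pendants to $u$, one for each color appearing in the clique $K$. These pendants make $u$ border every color, so any optimal free-game strategy may as well operate at $u$; on this instance free coincides with fixed, and the known hardness transfers in one line. No new gadget analysis is needed.

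Your proposed route---encoding a Vertex Cover instance with the clique as backbone and reusing the selected/unselected counting from Lemma~\ref{lem:reduction}---runs head-on into the issue you flag but do not solve: a single recoloring floods the entire clique, so the clique does not behave like the long path backbones of the earlier reductions. The engine of Lemma~\ref{lem:reduction} is that $m^2$ operations are forced just to connect the backbones, and the vertex cover is read off from the surplus; here the clique is connected from the outset, so that lower bound evaporates. Without a concrete mechanism that forces one distinguishable operation per edge of $H$ despite the clique's instant flooding, your plan is not yet a reduction---and even if you complete it, the paper's two-line ``pin the pivot'' modification of the existing fixed-game instance is the simpler path.
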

\begin{proof}
(1) In \cite{FleischerWoeginger2010}, the feedback vertex set problem
is reduced to the fixed flooding game on a split graph $G=(V,E)$.
The resulting graph $G$ consists of a clique $K$ and an independent set $I$.
Each vertex in $I$ has degree one except one universal vertex $u$ incident to 
all vertices in $K$. It is easy to see that this universal vertex $u$ can be 
one of the clique $K$. Now we add $\msize{K}$ vertices to $I$ and join them to $u$,
and each of them is colored by $\msize{K}$ colors that are same to the colors of 
vertices in $K$.
Then, the resultant graph is still split graph.
We consider the free flooding game on this new split graph.
Then, using the similar argument in \cite{FukuiNakanishiUeharaUnoUno2011},
this graph has a solution if and only if there is a sequence of operations that
always colors the universal vertex $u$.
Thus the feedback vertex set problem has a solution 
if and only if the free flooding game has a solution.

(2) We can observe that there is a solution of length at most $2k$ that first makes
all vertices in $K$ having the same color, and changes the color of the clique
to join the vertices in $I$. We can also see that there is an optimum solution of this form. 
This means that we always change the color of a clique vertex.
Since the vertices in $K$ of the same color are always connected,
the number of possibilities of each operation is at most $k'(k'-1)$, 
where $k'$ is the current number of colors used in $K$.
Thus, we can find an optimum solution in $O((k!)^2+n)$ time.
\QED\end{proof}

\section{Concluding remarks}
In this paper, we investigate the free flooding game on graphs that have interval representations. 
We show that this game is fixed parameter tractable with respect to the number of colors.
We also show the similar results for split graphs.
In \cite{FleischerWoeginger2010}, it is shown that 
the fixed flooding game on a co-comparability graph can be solved in polynomial 
time based on a dynamic programming technique.
In this case, computing a shortest path on a co-comparability graph can be 
a better idea than using the dynamic programming.
In the case, the idea may be extended to the free flooding game on a co-comparability graph, 
and we may obtain a polynomial time algorithm.

\section*{Acknowledgment}
The authors thank Eric Theirry for sending \cite{LagoutteNaualThierry2011}.


\bibliography{string,article8,articlee,article2012,book,myconf,mypaper}
\bibliographystyle{alpha}


\end{document}